\newtheorem{theorem}{Theorem}
\newtheorem{lemma}{Lemma}
\theoremstyle{remark}
\newtheorem{remark}{Remark}
\DeclarePairedDelimiterX\MeijerM[3]{\lparen}{\rparen}%
{\begin{smallmatrix}#1 \\ #2\end{smallmatrix}\delimsize\vert\,#3}
\newcommand\MeijerG[8][]{%
  G^{\,#2,#3}_{#4,#5}\MeijerM[#1]{#6}{#7}{#8}}
\newcommand\MeijerG*[7]{%
  G^{\,#1,#2}_{#3,#4}\MeijerM*{#5}{#6}{#7}}
\newcommand{\ui}{\textsc{I}}
\newcommand{\uo}{\textsc{O}}
\definecolor{gold}{rgb}{0.85,.66,0}
\definecolor{cian}{rgb}{.02,.7,.95}
\definecolor{ppp}{rgb}{.7,.3,.82}
\newcommand{\w}{\textcolor{cian}} 
\newcommand{\diag}{{\rm diag}} 
\def\BibTeX{{\rm B\kern-.05em{\sc i\kern-.025em b}\kern-.08em
    T\kern-.1667em\lower.7ex\hbox{E}\kern-.125emX}}
\begin{document}
\title{NOMA-aided double RIS under Nakagami-$m$ fading: Channel and System Modelling}
\author{\IEEEauthorblockN{Wilson de Souza Junior}
\IEEEauthorblockA{\textit{Department of  Electrical Engineering (DEEL)} \\
\textit{State University of Londrina  (UEL).}\\
86057-970, Londrina, PR, Brazil. \\
 wilsoonjr98@gmail.com \vspace{-5mm}}
\and
\IEEEauthorblockN{ {Taufik Abrão}}
\IEEEauthorblockA{\textit{Department of  Electrical Engineering (DEEL)} \\
\textit{State University of Londrina  (UEL).}\\
86057-970, Londrina, PR, Brazil. \\ taufik@uel.br
\vspace{-5mm}}
}

\maketitle

\begin{abstract}
We investigate the downlink outage performance of double-RIS-aided non-orthogonal multiple access (NOMA), where a near-BS and a near-users RISs setup are deployed. To extend the coverage to 360 degrees, we deploy a simultaneously transmitting and reflecting RIS (STAR-RIS) structure aiming to improve communication reliability for both indoor and outdoor users. New channel statistics for the end-to-end channel with Nakagami-$m$ considering both the conventional-RIS and the STAR-RIS antenna elements features are derived using the moment-matching (MM) technique. The numerical results reveal that the double-RIS setup can outperform the single-RIS setups when the number of elements of STAR-RIS ($R_S$) and conventional RIS ($R_C$) is suitably adjusted. Moreover, when the link between the base station and the near-user RIS is in good condition, the double-RIS setup is outperformed by the single-RIS setup. Finally, the proposed analytical equations reveal to be very accurate under different channel and system configurations.
\end{abstract} 
\acresetall
\smallskip
\noindent 

\begin{IEEEkeywords}
Simultaneously Transmitting and
Reflecting Reconfigurable Intelligent Surfaces (STAR-RIS); outage performance; moment-matching (MM)
\end{IEEEkeywords}

\section{Introduction}
\lettrine[findent=2pt]{\textbf{T}}{}he upcoming beyond fifth generation (B5G) wireless networks aim to provide high connectivity and different services for the most different kinds of users. Reconfigurable intelligent surface (RIS) has gained much space both in the research community and industries, and it is envisioned as a promising candidate to meet the stringent demands of future networks, mainly due to its ability to create a configurable propagation environment, enhancing metrics such as Spectral Efficiency (SE) \cite{9863732}, Energy Efficiency (EE) \cite{10015822}, as well as indicators as Quality of Service (QoS). However, depending on the place where the conventional RISs (C-RISs) are installed, users located in specific positions (blind spots) will not be served. Recently a new proposed architecture named simultaneously transmitting and reflecting RIS (STAR-RIS) can overcome this limitation by extending the coverage of communication {by} $180^{\circ}$ to $360^{\circ}$. Besides, it has been shown that the RIS should be installed closer to the receiver to achieve the best performance gain\cite{9500188}; thus, in order to improve the potential of this technology, double-RISs have been attracting much attention due to the high potential by combining the RIS deployment  near to the transmitter and receiver.  

Furthermore, with the exponential growth of connected devices in the network, schemes able to support massive connectivity are demanded, hence, non-orthogonal multiple access (NOMA) is considered a promising multiple access technique for B5G since it can provide access for a higher number of users, improving the system spectral efficiency. 
Recently, there is an effort to predict the behavior of the RIS-aided channels; hence, analytical expressions for {\it outage probability} (OP) and {\it ergodic capacity} (EC) considering different multiple access systems have been proposed in \cite{cheng2021downlink, tahir2020analysis, han2022double, yue2022simultaneously}. These analytical approaches can be useful to assess the system's reliability and understand how the RIS-aided system scales according to some parameters.
In \cite{cheng2021downlink}, the authors derived analytical expressions for OP and EC for NOMA and OMA in downlink and uplink networks. In \cite{tahir2020analysis} the authors derived analytical equations for OP in a NOMA-aided single RIS scenario with random and optimal phase-shift design through the statistical moment-matching (MM) technique. In \cite{han2022double}, the authors proposed a method to optimize the phase-shift of double-RIS-aided MIMO communication setup to maximize the system capacity. \cite{yue2022simultaneously} investigated the performance of STAR-RIS assisted NOMA networks over Rician fading channels, and analytical equations for OP are derived.

To the best of the authors’ knowledge, the analytical channel prediction behavior of a double-RIS (C-RIS + STAR-RIS)-aided NOMA has not yet been considered in the literature. Hence, motivated by the research gap considering this scenario we investigated the closed-form expressions for OP and EC; besides, we characterized conditions in which double-RIS configurations can be indeed profitable. The major contributions of this work are threefold: 
\textit{\textbf{i}}) we analyze a downlink double-RIS system, where a conventional RIS and a STAR-RIS are deployed to serve an indoor and an outdoor user. Through the MM technique, we statically characterize the end-to-end channel for this new scenario where all the communications links are considered as Nakagami-$m$;
\textit{\textbf{ii}}) the OP and EC are derived in simple and accurate closed-form expressions and calculated for different parameters configurations of the double-RIS setup, as the number of elements in each RIS, as well as the $m$ parameter of Nakagami distribution (LoS and NLoS impact). The developed analytical results are confirmed by Monte-Carlo simulations (MCs);
\textit{\textbf{iii}}) we expose in which conditions and configurations the double-RIS setup can outperform the single-RIS setup.

\noindent\textit{Notations:} Scalars, vectors, and matrices are denoted by the lower-case, bold-face lower-case, and bold-face upper-case letters, respectively; $\mathbb{C}^{N\times M}$ denotes the space of $N\times M$ complex matrices; $|\cdot|$ denote the absolute operator; $\diag(\cdot)$ denotes the diagonal operator; $\Gamma(\cdot)$ is the gamma function; $\gamma(\cdot)$ is the lower incomplete gamma function; $\mathbb{E}[\cdot]$ and $\mathbb{V}[\cdot]$ denotes the statistical expectation and variance operator; $\mathcal{G}(k,\theta)$ denotes the
distribution of a Gamma random variable with shape and scale parameter $k$ and $\theta$ respectively; $\mathcal{G}\mathcal{G}(\alpha,\beta,\gamma,c)$ denotes an r.v. whose distribution is generalized-Gamma where the parameters are $\alpha,\beta,\gamma$ and $c$; $Nakagami(m,\Omega)$ denotes the
distribution of a Nakagami-$m$ random variable with shape and spread parameter $m$ and $\Omega$ respectively;

\section{System Model}\label{sec:sys_model}
As shown in Fig. \ref{fig:RIScell}, we consider a downlink double-RIS-aided two-user communication system, where the base station (BS), denoted by $S$, and the users are equipped with one single antenna. To extend the coverage to 360 degrees, we consider a STAR-RIS deployed in the building facade which can enhance the communication of indoor user ($U_\ui$) and outdoor user ($U_\uo$) simultaneously. We assume the $U_\ui$ and $U_\uo$ as the strongest and weakest users, respectively. The RISs are located near $S$ (conventional RIS referred to as $R_C$) and near users (STAR-RIS referred to as $R_S$). Let us denote as $N$ the total number of passive elements for the two RIS, where $R_C$ and $R_S$ consist of $N_C = |\mathcal{N}_C|$ and $N_S = |\mathcal{N}_S|$ elements respectively, with the set of RIS elements defined as $\mathcal{N}_C = \{1,\dots N_C \}$, $\mathcal{N}_S = \{1,\dots N_S \}$ and $N_C+N_S=N$. We define the {\it split factor} as $\eta \triangleq \frac{N_C}{N}$. Furthermore, we consider that $U_\ui$ and $U_\uo$ are paired to perform NOMA.

The $R_C$ can be configured aiming to improve the communication through its phase-shift matrix given as  $\boldsymbol{\Phi}_C = \diag([\nu_1e^{j \theta_1},\dots,\nu_{N_C}e^{j \theta_{N_C}}])$, where $\theta_n \in [0,2\pi]$ and $\nu_n \in (0,1]$ are the phase-shift and the amplitude coefficient applied at the $n$-th element of $R_C$, respectively. Here, we adopt ideally $\nu_n =1$, $\forall n \in\{1,\dots,N_C\}$. Besides, concerning the STAR-RIS ($R_{S}$), its phase-shift matrix can be defined as $\boldsymbol{\Phi}_S^i = \diag([\xi_{i,1}e^{j \varphi_{i,1}},\dots,\xi_{i,N_S}e^{j \varphi_{i,N_S}}])$ with $i \in \{I,O\}$, where $\xi_{\ui,n}$ ($\varphi_{\ui,n}$) and $\xi_{\uo,n}$ ($\varphi_{\uo,n}$) denotes the amplitude coefficient (phase-shift) for refraction and reflection of $n$-th element respectively. 

This paper focuses on the energy splitting (ES) protocol, where the STAR-RIS is assumed to operate in transmitting and reflecting mode simultaneously by splitting the total radiation energy into the refraction and reflection signals. By obeying the energy conservation law, the sum of the energies of the transmitted and reflected signals should be equal to the incident signal's energy, $i.e.$, $\xi_{\ui,n}+\xi_{\uo,n}=1 \; \forall n \in \mathcal{N}_S$. Without loss of generality we adopt $\xi_{i,n}=\xi=0.5$, $\forall n \in \mathcal{N}_S$ and $\forall i \in \{I,O\}$.

\subsection{Channel Model} \label{ss:channelmodel}

In order to systematically characterize our proposed system model, let us denote $\mathbf{f} = [f_1,\dots,f_{N_C}]$ $\in \mathbb{C}^{1\times N_C}$, $\mathbf{D} = [\mathbf{d}_1,\dots,\mathbf{d}_{N_S}]$ $\in \mathbb{C}^{N_C\times N_S}$, $\mathbf{u}^{i}=[u^i_1,\dots,u^i_{N_S}]^T$ $\in \mathbb{C}^{N_S\times 1}$ $\forall i \in \{\ui,\uo\}$, $\mathbf{g} =[g_1,\dots,g_{N_C}]^T$ $\in \mathbb{C}^{N_C\times 1}$ and $\mathbf{t}=[t_1,\dots,t_{N_S}]$ $\in \mathbb{C}^{1 \times N_S}$ as the channel between the {$S$} $\rightarrow$ $R_C$, $R_C$ $\rightarrow$ $R_S$, $R_S$ $\rightarrow$ $U_{i}$, $R_C$ $\rightarrow$ $U_\uo$\footnote{In this work, we 
do not consider the link between the $R_C$ $\rightarrow$ $U_\ui$.} and $S$ $\rightarrow$ $R_S$ respectively. We consider that the absolute value of all elements of $\mathbf{f}, \mathbf{D}, \mathbf{u}^i, \mathbf{g}$ and $\mathbf{t}$ follow a Nakagami-$m$ distribution with parameters $(m_f,\Omega_f)$, $(m_D,\Omega_D)$, $(m_{U_i},\Omega_{U_i})$, $(m_g,\Omega_g)$ and $(m_t,\Omega_t)$, respectively. Since the channel estimation problem is out of the scope of this manuscript, we assume that perfect channel state information (CSI) is available at $S$.

\begin{figure}[htbp!]
\vspace{-3mm}
\centering
\includegraphics[width=.8\linewidth]{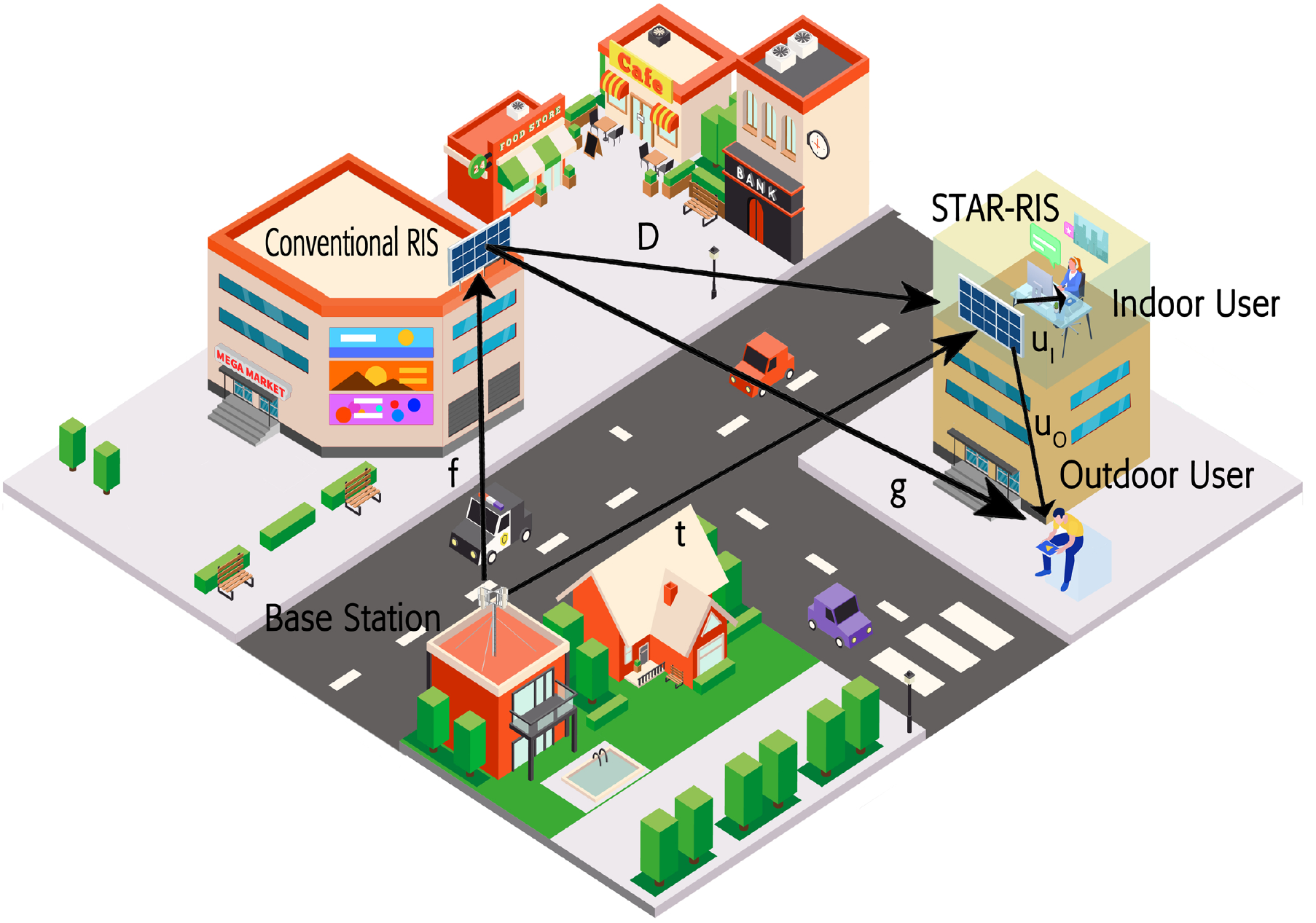}
\vspace{-3mm}
\caption{Application scenario for double RIS-aided mobile communication.}
\label{fig:RIScell}
\end{figure}

\vspace{-3mm}
\section{New Statistics of the end-to-end Channel} \label{ss:stastistcs}

In this section, we intend to derive new channel statistics for the proposed double-RIS-aided network, which will be useful to compute the OP and EC in the following subsections. 

As illustrated in Fig. \ref{fig:RIScell} we consider the double reflection link as well as the single reflection link from $S$ to $R_S$ towards the $U_\ui$ and $U_\uo$ and from the $R_C$ to the $U_\uo$, therefore, one can write the cascaded channels of both user as follows:
\begin{IEEEeqnarray}{rCl}
&&h_I = \underbrace{\sqrt{\beta_{t}\beta_{U_\ui}}\mathbf{t} \boldsymbol{\Phi}_S^I \mathbf{u}_I}_{\text{single-reflection}} + \underbrace{\sqrt{\beta_{f}\beta_{D}\beta_{U_\ui}}\mathbf{f}
    \boldsymbol{\Phi}_C \mathbf{D} \boldsymbol{\Phi}_S^{I} \mathbf{u}_I}_{\text{double-reflection}}, \label{eq:hi}
    \\
    &&h_O = \small \underbrace{\sqrt{\beta_{f}\beta_{g}}\mathbf{f} \boldsymbol{\Phi}_C \mathbf{g}}_{\text{single-reflection}} + \underbrace{\sqrt{\beta_{t}\beta_{U_\uo}}\mathbf{t} \boldsymbol{\Phi}_S^O \mathbf{u}_O}_{\text{single-reflection}} \nonumber \\ && \hspace{3cm} + \underbrace{\sqrt{\beta_{f}\beta_{D}\beta_{U_\uo}}\mathbf{f}
    \boldsymbol{\Phi}_C \mathbf{D} \boldsymbol{\Phi}_S^{O} \mathbf{u}_O}_{\text{double-reflection}}, \label{eq:ho}
\end{IEEEeqnarray}
where $\beta_v = \left(\frac{d_0}{d_v}\right)^{\alpha_v}$ denotes the {\it large-scale fading} gain (path-loss), $d_0$ is the reference distance, $d_v$ is the distance of link $v$, and $\alpha_v$ is the path-loss coefficient, $\forall v \in \{f,g,t,D,{u_\ui},{u_\uo}\}$.

Notice that in order to maximize the single-reflection link for the $U_\uo$ user, the optimal phase-shift in $R_C$ can be applied setting $\theta_n=-\left(\angle f_n + \angle g_n \right)\; \forall n\in \mathcal{N}_C$; besides, aiming to explore the potential of the deployed $R_S$, we consider the case where it is configured to maximize the received power from the double-reflection link of both users, $i.e.$, the phase-shifts of the $R_S$ are set as $\varphi_{i,n} = -\left(\angle([\mathbf{f}\boldsymbol{\Phi}_C\mathbf{D}]_n) + \angle u_{i,n}\right)\; \forall n\in \mathcal{N}_S$, therefore, the signal of $U_\ui$ and $U_\uo$ will be coherently combined by the $R_S$. Thus, by setting the coherent phase shift design in both RISs, the following lemma is proposed.

\begin{lemma} \label{lemma:Channel}
 $|h_\ui| \overset{approx}{\sim} \mathcal{G}\left(k_{h_\ui},\sqrt{\beta_{U_\ui}} 
 \theta_{h_\ui}\right)$ whose scale and shape parameters are given by Eq. \eqref{eq:parameters_hi}. Moreover, the distribution $|h_O| \overset{approx}{\sim} \mathcal{G}\left(k_{h_O}, \theta_{h_O} \right)$ whose scale and shape parameters are given by Eq. \eqref{eq:parameters_ho}, where  $ \mu_v = \frac{\Gamma(m_v+1/2)\sqrt{\Omega_v}}{\Gamma(m_v)\sqrt{m_v}}$, $\forall v \in \{f,g,t,D,{u_\ui},{u_\uo}\}$. 

\begin{equation}
\begin{cases}
    k_{h_\ui}  = \frac{N_S \mu_{U_\ui}^2 \frac{\pi}{4}}{ \left (\Omega_{{U_\ui}} -  \mu^2_{U_\ui} \frac{\pi}{4} \right)}, 
    \\
    \\
    \theta_{h_\ui} = \frac{\xi \sqrt{\beta_{U_\ui}}\sqrt{(\beta_t+\beta_f\beta_D N_C)}( \Omega_{U_\ui} - \mu^2_{U_\ui} \frac{\pi}{4} )}{ \sqrt{\frac{\pi}{4}} \mu_{U_\ui}}, \label{eq:parameters_hi}  
\end{cases}
\end{equation}

\begin{equation}
\scriptstyle{
    \begin{cases}
     \hspace{-0.1cm} k_{h_O} \hspace{-0.15cm} = \hspace{-0.1cm} \frac{ \left( \sqrt{\beta_{U_\uo}} \xi N_S \mu_{U_\uo}\sqrt{\frac{\pi(\beta_t+\beta_f\beta_D N_C)}{4}} + \sqrt{\beta_f \beta_g} N_C \mu_f \mu_g \right)^2}{\hspace{-0.12cm} \beta_f \beta_g N_C \hspace{-0.05cm} \left(\Omega_f\Omega_g \hspace{-0.05cm}-\hspace{-0.05cm} \mu_f^2 \mu_g^2 \right) + \xi^2 \beta_{U_\uo} N_S (\beta_t+\beta_f\beta_D N_C) \hspace{-0.08cm}\left (\hspace{-0.1cm}\frac{4\Omega_{U_\uo} \hspace{-0.1cm}-\mu_{U_\uo}^2\pi}{4} \hspace{-0.1cm} \right) },
    \\
    \\
     \hspace{-0.1cm} \theta_{h_O}  = \frac{ \sqrt{\beta_{U_\uo}}  \xi N_S \mu_{U_\uo} \sqrt{\frac{\pi(\beta_t+\beta_f\beta_D N_C)}{4}} + \sqrt{\beta_f\beta_g} N_C \mu_f \mu_g  }{k_{h_O}}.\label{eq:parameters_ho}
    \end{cases}
    }
\end{equation}

\begin{proof}
The proof can be found in Appendix \ref{app:a}.
\end{proof}
\end{lemma}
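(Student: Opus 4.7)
\noindent\textit{Proof plan.} My approach is to apply the moment-matching technique: approximate each of $|h_\ui|$ and $|h_\uo|$ by a Gamma variable $\mathcal{G}(k,\theta)$ and exploit the identities $k=(\mathbb{E}[|h|])^2/\mathbb{V}[|h|]$ and $\theta=\mathbb{V}[|h|]/\mathbb{E}[|h|]$ that follow from $\mathbb{E}[X]=k\theta$, $\mathbb{V}[X]=k\theta^2$. The entire task therefore reduces to deriving closed-form expressions for the first two moments of each cascaded channel.

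First I would insert the optimal phase-shift designs announced in the paragraph preceding the lemma. With $\theta_n=-(\angle f_n+\angle g_n)$ the link $\mathbf{f}\boldsymbol{\Phi}_C\mathbf{g}$ collapses to the coherent sum $\sum_{n=1}^{N_C}|f_n||g_n|$. Defining the intermediate quantity $\tilde{d}_n\triangleq[\mathbf{f}\boldsymbol{\Phi}_C\mathbf{D}]_n=\sum_k|f_k|e^{-j\angle g_k}D_{k,n}$, the choice $\varphi_{i,n}=-(\angle\tilde{d}_n+\angle u_{i,n})$ collapses the double-reflection term to $\sum_{n=1}^{N_S}|\tilde{d}_n||u_{i,n}|$, while the surviving single-reflection through $\mathbf{t}$ becomes an incoherent complex sum $\sum_n t_n e^{-j\angle\tilde{d}_n}|u_{i,n}|$. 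Because $t_n$ is circularly symmetric and independent of $\tilde{d}_n$, this latter sum is zero-mean and orthogonal in the $L^2$ sense to the coherent double-reflection contribution, which will be crucial to keep cross-covariances tractable.

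The technical heart of the proof is then to characterize $|\tilde{d}_n|$. Conditionally on the magnitudes $|f_k|$ and phases $\angle g_k$, the inner sum defining $\tilde{d}_n$ is a sum of $N_C$ independent circularly-symmetric terms, so a Rayleigh/CLT argument makes $|\tilde{d}_n|$ approximately Rayleigh conditional on $\mathbf{f}$. Averaging over $\mathbf{f}$ yields $\mathbb{E}[|\tilde{d}_n|^2]=N_C\Omega_f\Omega_D$ exactly, together with $\mathbb{E}[|\tilde{d}_n|]\approx\sqrt{(\pi/4)N_C\Omega_f\Omega_D}$ and, for $n\neq n'$, $\mathbb{E}[|\tilde{d}_n||\tilde{d}_{n'}|]\approx(\pi/4)N_C\Omega_f\Omega_D$ via the conditional independence of distinct columns of $\mathbf{D}$. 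Packaging the per-$R_S$-element compound $W_n\triangleq\sqrt{\beta_t}\,t_n e^{-j\angle\tilde{d}_n}+\sqrt{\beta_f\beta_D}\,|\tilde{d}_n|$ and using the orthogonality just noted, one obtains $\mathbb{E}[|W_n|^2]=\beta_t\Omega_t+\beta_f\beta_D N_C\Omega_f\Omega_D$, which—after absorbing the intermediate-hop spreads into the large-scale factors—produces the signature sum $\beta_t+\beta_f\beta_D N_C$ that appears under the square roots in \eqref{eq:parameters_hi}--\eqref{eq:parameters_ho}.

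With these building blocks, $\mathbb{E}[|h_\ui|]$ and $\mathbb{V}[|h_\ui|]$ are obtained by summing $N_S$ contributions of the form $|u_{\ui,n}|W_n$, whose per-element moments are expressible through $\mu_{U_\ui}$, $\Omega_{U_\ui}$ and the signature sum above; substituting into the two moment-matching identities and simplifying reproduces \eqref{eq:parameters_hi}. For $|h_\uo|$ the same procedure carries over, only with a third, uncorrelated additive component, namely the coherent single-reflection $\sqrt{\beta_f\beta_g}\sum_n|f_n||g_n|$ with mean $\sqrt{\beta_f\beta_g}N_C\mu_f\mu_g$ and variance $\beta_f\beta_g N_C(\Omega_f\Omega_g-\mu_f^2\mu_g^2)$; adding means and variances across the three uncorrelated terms and plugging into the matching formulas reproduces \eqref{eq:parameters_ho}. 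The step I expect to be the main obstacle is the treatment of the nested random sum $\tilde{d}_n$: one must justify the Rayleigh/CLT approximation on the inner sum and then carefully track how the resulting $\sqrt{\pi/4}$ factor propagates through the outer sum over $N_S$ Nakagami-weighted contributions, so that it combines with $\mu_{U_i}$ and $\Omega_{U_i}$ into the precise $(\Omega_{U_i}-\mu_{U_i}^2\pi/4)$ form required by the statement.
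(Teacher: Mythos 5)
Your overall strategy---moment-matching each cascaded channel to a Gamma law via $k=\mathbb{E}^2/\mathbb{V}$, $\theta=\mathbb{V}/\mathbb{E}$, collapsing the coherently phased link $\mathbf{f}\boldsymbol{\Phi}_C\mathbf{g}$ to $\sum_n|f_n||g_n|$, and invoking a CLT on the inner cascade $\mathbf{f}\boldsymbol{\Phi}_C\mathbf{D}$---is exactly the paper's (Appendix A), and your combination of the Gamma components of $h_\uo$ by adding means and variances is equivalent to the Welch--Satterthwaite step the paper uses. Your handling of the cross-moments $\mathbb{E}[|\tilde d_n||\tilde d_{n'}|]$ is in fact more careful than the paper, which simply declares $\mathbf{w}\overset{\rm approx}{\sim}\mathcal{CN}(\mathbf{0},N_C\mathbf{I}_{N_S})$ and merely acknowledges that the neglected correlation degrades the fit.

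There is, however, one concrete divergence that prevents your plan from reproducing the stated parameters. You align the STAR-RIS phases with $\angle\tilde d_n=\angle[\mathbf{f}\boldsymbol{\Phi}_C\mathbf{D}]_n$ alone, so the $\sqrt{\beta_t}\,\mathbf{t}$ contribution survives as a zero-mean incoherent term: with that decomposition $\beta_t$ enters only the variance, and your per-element mean is proportional to $\sqrt{\pi\beta_f\beta_D N_C/4}$. But the formulas in \eqref{eq:parameters_hi}--\eqref{eq:parameters_ho} carry $\beta_t$ inside the \emph{mean}, through the factor $\mu_x=\sqrt{\pi(\beta_t+\beta_f\beta_D N_C)/4}$. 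The paper obtains this by first merging $\mathbf{x}=\sqrt{\beta_t}\mathbf{t}+\sqrt{\beta_f\beta_D}\mathbf{f}\boldsymbol{\Phi}_C\mathbf{D}$ into a single complex Gaussian vector, taking $|x_n|$ Rayleigh with second moment $\beta_t+\beta_f\beta_D N_C$, and then setting $\varphi_{i,n}=-(\angle x_n+\angle u_{i,n})$ so that the entire per-element contribution $\xi|x_n||u_{i,n}|$ is real and coherent, giving $Z_i=\xi\sum_n|x_n||u_{i,n}|$. (The main text states the phase design your way; the appendix silently switches to $\angle x_n$, and it is the latter that yields the lemma.) With your decomposition the moment-matched $(k,\theta)$ would differ from \eqref{eq:parameters_hi}--\eqref{eq:parameters_ho} whenever $\beta_t>0$, so to prove the lemma as stated you must adopt the combined-element phase alignment, or else show that your complex compound $W_n$ can be realigned to have the Rayleigh magnitude statistics of $x_n$. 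A second, smaller point: your $\mathbb{E}[|W_n|^2]=\beta_t\Omega_t+\beta_f\beta_D N_C\Omega_f\Omega_D$ retains the spread parameters, whereas the lemma's expressions presuppose $\Omega_t=\Omega_f=\Omega_D=1$ (or their absorption into the large-scale gains), which should be stated explicitly.
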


\begin{remark}
If only the double-reflection link is considered,  $\beta_t=\beta_g=0$, besides, if only the single-reflection link is considered, we have that $\beta_D=0$.
\end{remark}

Besides, the probability density function (PDF) and cumulative distribution function (CDF) of $|h_{i}|$ can be given as \cite{papoulis1990probability}
\begin{equation} \label{eq:gammapdf}
    f_{h_i}(x) = \dfrac{1}{\Gamma(k_{h_i})\theta_{h_i}^{k_{h_i}}} x^{k_{h_i}-1} e^{-\frac{x}{\theta_{h_i}}},  
\end{equation}

\begin{equation} \label{eq:gammacdf}
    F_{h_i}(x) = \dfrac{1}{\Gamma(k_{h_i})} \gamma\left(k_{h_i},\dfrac{x}{\theta_{h_i}} \right).    
\end{equation}

%


\vspace{2mm}
\section{Outage Probability and Ergodic Capacity}

We derive new OP closed-form expressions for double-RIS systems aided by STAR-RIS topology.

\subsection{Signal Model of NOMA under Power Domain}

For downlink communications, the $S$ transmits the signal $x = \sqrt{P}\left(\sqrt{{\lambda_I}}s_I + \sqrt{{\lambda_O}}s_O\right)$, where $P$ denotes the transmit power of the $S$, $s_I$ and $s_O$ denote the transmitted signals to $U_\ui$ and $U_\uo$, respectively, and ${\lambda_I}$ and ${\lambda_O}$ denote the power allocation coefficients for
$U_\ui$ and $U_\uo$, respectively, with ${\lambda_I} + {\lambda_O} = 1$, and {${\lambda_I} < {\lambda_O}$}. {On} the indoor and outdoor user side, the received signal is given by
\begin{equation}
y_i = h_i x + z_i, \qquad i \in \{I,O\},
\end{equation}
where the background noise $z_i \sim \mathcal{CN}(0, \sigma^2)$.

At $U_\ui$, the signal of $U_\uo$ is detected first, and the corresponding signal-to-interference-plus-noise ratio (SINR) is given by
\begin{equation}
\gamma_{\textsc{o}\rightarrow \textsc{i}} = \frac{\rho |h_I|^2 {\lambda_O}}{\rho |h_I|^2{\lambda_I}+1},
\end{equation}
where $\rho = \frac{P}{\sigma^2}$ is the transmit SNR under the AWGN channel. Besides, assuming perfect signal reconstruction and canceling in the successive interference cancellation (SIC) step, the SNR at the $U_\ui$ is simply given by:
\begin{equation} \label{eq:gammai}
\gamma_{\ui} = \rho |h_\ui|^2 {\lambda_I}.
\end{equation}
At $U_\uo$, its signal is decoded directly by regarding $U_\ui$’s signal as
interference, resulting in the following SINR
\begin{equation} \label{eq:gammao}
\gamma_{\uo} = \frac{\rho|h_\uo|^2{\lambda_O}}{\rho|h_\uo|^2{\lambda_I}+1}.
\end{equation}

\subsection{Outage Probability at Indoor User}

According to the principle of NOMA, the outage behavior for the $U_\ui$ occurs when it cannot effectively detect the user $U_\uo$' message and its message consequently or when the $U_\uo$' message is decoded successfully but an error occurs in the SIC process to decode its own message. Mathematically it can be formulated by the union of the following events:
\begin{IEEEeqnarray}{rCl}
     P^{U_\ui}_{\rm out} &=& 1 - {\rm Pr}(\gamma_{{\textsc{i}}\rightarrow {\textsc{o}}} > \Bar{\gamma}_{{\rm th}_\uo}, \gamma_{\textsc{i}} > \Bar{\gamma}_{{\rm th}_\ui} ) \nonumber 
     \\
    &=& 1 - {\rm Pr}\left( |h_\ui| > \sqrt{\dfrac{\Bar{\gamma}_{{\rm th}_\uo}}{\rho({\lambda_O}-{\lambda_I} \Bar{\gamma}_{{\rm th}_\uo})}}, |h_\ui| > \sqrt{\dfrac{\Bar{\gamma}_{{\rm th}_\ui}}{{\lambda_I}\rho}} \right), \nonumber
\end{IEEEeqnarray}
where $\Bar{\gamma}_{{\rm th}_\uo}$ and $\Bar{\gamma}_{{\rm th}_\ui}$ are the SINR threshold for $U_\uo$ and $U_\ui$, respectively.  Clearly, if ${\lambda_O}\leq{\lambda_I}\Bar{\gamma}_{{\rm th}_\uo}$, $P^{U_\ui}_{\rm out}=1$, otherwise by utilizing Eq. \eqref{eq:gammacdf} the OP of $U_\ui$ can be given as
\begin{equation} 
P^{U_\ui}_{\rm out} = \dfrac{1}{\Gamma(k_{h_\ui})} \gamma\left(\hspace{-1.2mm}k_{h_\ui},\dfrac{\max\left\{\sqrt{\dfrac{\Bar{\gamma}_{{\rm th}_\uo}}{\rho({\lambda_O}-{\lambda_I} \Bar{\gamma}_{{\rm th}_\uo})}}, \sqrt{\dfrac{\Bar{\gamma}_{{\rm th}_\ui}}{{\lambda_I} \rho}} \right\}}{\theta_{h_\ui}} \right), \label{eq:OPI}
\end{equation}
where $k_{h_\ui}$ and $\theta_{h_\ui}$ are given respectively by Eq. \eqref{eq:parameters_hi}.
\subsection{Outage Probability at Outdoor User}

In this case, the outage behavior for the $U_\uo$ occurs since $U_\uo$ cannot detect effectively its message. Hence, the OP of $U_\uo$ is defined as
\begin{IEEEeqnarray}{rCl}
    P^{U_\uo}_{\rm out} &=& {\rm Pr}(\gamma_\uo < \Bar{\gamma}_{{\rm th}_\uo}) \nonumber 
    = {\rm Pr}\left(|h_\uo| < \sqrt{\dfrac{\Bar{\gamma}_{{\rm th}_\uo}}{\rho({\lambda_O}-{\lambda_I} \Bar{\gamma}_{{\rm th}_\uo})}} \right) \nonumber
    \\
    &=& \frac{1}{\Gamma(k_{h_\uo})} \gamma\left(k_{h_\uo}, \dfrac{\sqrt{\dfrac{\Bar{\gamma}_{{\rm th}_\uo}}{\rho({\lambda_O}-{\lambda_I} \Bar{\gamma}_{{\rm th}_\uo})}}}{\theta_{h_\uo}}\right),
    \label{eq:OPO}
\end{IEEEeqnarray}
with $k_{h_\uo}$ and $\theta_{h_\uo}$ are given respectively by Eq. \eqref{eq:parameters_ho}.

\subsection{Ergodic Capacity}
In this sub-section, the EC of the $U_\ui$ and $U_\uo$ for NOMA-Aided Conventional and STAR-RIS is derived based on Theorem \ref{theo:EC}.

\begin{theorem} \label{theo:EC}
In NOMA-Aided Conventional-RIS and STAR-RIS networks, the ergodic rate for the $U_\ui$ and $U_\uo$ can be predicted as

\begin{equation}
    \begin{cases} \label{eq:ER}
        \Bar{R}_\ui \approx \log_2\left(1 +  \rho \lambda_{\ui} N_S^2 \mu_{U_\ui}^2 \pi \xi^2\beta_{U_\ui}(\beta_t + \beta_f \beta_D N_C) \right), 
        \\
        \\
        \Bar{R}_\uo  \approx 
        \log_2 \left(\hspace{-0.1cm}1 \hspace{-0.1cm}+\hspace{-0.1cm} \frac{\rho{\lambda_O} \left(  \frac{\xi N_S \mu_{U_\uo} \sqrt{\beta_{U_\uo}\pi(\beta_t+\beta_f\beta_D N_C)} + 2\sqrt{\beta_f\beta_g} N_C \mu_f \mu_g   }{2} \right)^2 }{\rho{\lambda_I} \left( \frac{\xi N_S \mu_{U_\uo} \sqrt{\beta_{U_\uo}\pi(\beta_t+\beta_f\beta_D N_C)} + 2\sqrt{\beta_f\beta_g} N_C \mu_f \mu_g }{2} \right)^2 + 1 } \hspace{-0.1cm} \right),
    \end{cases}
\end{equation}
\end{theorem}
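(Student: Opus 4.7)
I would start from the definition $\bar{R}_i = \mathbb{E}[\log_2(1+\gamma_i)]$ for $i\in\{\ui,\uo\}$, with $\gamma_\ui$ and $\gamma_\uo$ given by \eqref{eq:gammai} and \eqref{eq:gammao}, and then invoke the standard mean-value (Jensen-type) approximation $\mathbb{E}[\log_2(1+X)]\approx\log_2(1+\mathbb{E}[X])$. This substitution is justified here because Lemma~\ref{lemma:Channel} shows that $|h_\ui|$ and $|h_\uo|$ are (approximately) Gamma-distributed with shape parameters that grow linearly in $N_S$ and $N_C$; consequently the relative variance of $|h_i|^{2}$ becomes negligible for moderately sized RISs and $|h_i|^{2}$ can be replaced by $(\mathbb{E}[|h_i|])^{2}$ to leading order.

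For the indoor user, $\gamma_\ui=\rho\lambda_\ui |h_\ui|^{2}$, so the approximation reduces the problem to computing $\mathbb{E}[|h_\ui|]=k_{h_\ui}\sqrt{\beta_{U_\ui}}\,\theta_{h_\ui}$. Substituting the closed-form expressions in \eqref{eq:parameters_hi}, the bracket $(\Omega_{U_\ui}-\mu_{U_\ui}^{2}\pi/4)$ cancels between the numerator of $\theta_{h_\ui}$ and the denominator of $k_{h_\ui}$, leaving a compact product proportional to $N_S\mu_{U_\ui}\xi\sqrt{\beta_{U_\ui}(\beta_t+\beta_f\beta_D N_C)}$ whose square, once multiplied by $\rho\lambda_\ui$ and placed inside $\log_2(1+\cdot)$, matches (up to the prefactor collected in front of $\pi\xi^{2}$) the stated formula for $\bar{R}_\ui$ in \eqref{eq:ER}.

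For the outdoor user, the argument of the log is a ratio in $|h_\uo|^{2}$, so I would apply the mean approximation at the level of $|h_\uo|$ itself, replacing $|h_\uo|^{2}\approx(\mathbb{E}[|h_\uo|])^{2}=(k_{h_\uo}\theta_{h_\uo})^{2}$. The product $k_{h_\uo}\theta_{h_\uo}$ is particularly clean: because $\theta_{h_\uo}$ is \emph{defined} in \eqref{eq:parameters_ho} as the numerator divided by $k_{h_\uo}$, the two factors telescope, collapsing immediately to $\xi N_S\mu_{U_\uo}\sqrt{\beta_{U_\uo}\pi(\beta_t+\beta_f\beta_D N_C)/4}+\sqrt{\beta_f\beta_g}\,N_C\mu_f\mu_g$. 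Substituting this quantity, squared, into $\gamma_\uo$ yields exactly the SINR argument displayed in \eqref{eq:ER}, with no further integration required.

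The main obstacle I anticipate is justifying the Jensen-type substitution rather than attempting the exact expectation $\mathbb{E}[\log_2(1+\rho\lambda_\ui|h_\ui|^{2})]$ in closed form via a Meijer-$G$ representation (which would be exact but unwieldy and less faithful to the paper's style). A secondary subtlety is that for $\bar{R}_\uo$ the mean substitution takes place \emph{inside} a ratio, so strictly speaking one is invoking both the concentration of $|h_\uo|^{2}$ \emph{and} the smoothness and monotonicity of $x\mapsto \rho\lambda_\uo x/(\rho\lambda_\ui x+1)$ to propagate the first-order approximation through the non-linearity. Beyond this, the derivation is purely algebraic bookkeeping of the Gamma moments already supplied by Lemma~\ref{lemma:Channel}.
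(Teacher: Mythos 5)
Your proposal is correct and follows essentially the same route as the paper: Jensen's inequality to move the expectation inside the $\log_2$, followed by replacing the channel power with $(k_i\theta_i)^2$ and reading off the telescoped products of the Gamma parameters from Lemma~\ref{lemma:Channel}. The only cosmetic difference is that the paper reaches $(k_i\theta_i)^2$ by first computing $\mathbb{E}[|h_i|^2]=\theta_i^2\,\Gamma(k_i+2)/\Gamma(k_i)$ exactly and then invoking $\Gamma(k_i+2)\approx\Gamma(k_i)k_i^2$ for large $k_i$, whereas you substitute $(\mathbb{E}[|h_i|])^2$ directly on concentration grounds --- the two steps are equivalent for large shape parameter.
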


\begin{proof}
The proof can be found in Appendix \ref{app:b}.
\end{proof}

\section{Simulation Results}\label{sec:simul}
In this section, we provide numerical results to evaluate the accuracy of the proposed analytical equations. We consider three distinct scenarios:
\underline{Sc.{\bf A}}: single- and double-reflection links are considered; \underline{Sc.{\bf B}}: only double-reflection link is considered; \underline{Sc.{\bf C}}: only single-reflection links are considered. 

We show the impact of the reflecting elements on the performance of the NOMA-aided conventional and STAR-RIS system. The MCs results are averaged over $10^5$ realizations. 
Unless stated otherwise, the adopted parameter values are presented in Table \ref{t:param}.

\begin{table}[h!] 
\caption{Adopted Simulation Parameters.}
\small
\label{t:param} 
\begin{center}
\begin{tabular}{ll}
\toprule
\bf Parameter  & \bf  Value\\
\hline 
\multicolumn{2}{c}{\bf  NOMA-aided Conventional and STAR-RIS System}\\
\hline
Transmit SNR          &   $\rho  \in [0;\, 50]$ [dB]\\
\#Total Reflecting Meta-Surfaces       &   $N\in \{50,100,200\}$\\
Split Factor & {$\eta=0.35$}\\
Power Allocation coefficients            &   ${\lambda_I}$ = {0.15}, ${\lambda_O}$ = {0.75}\\ 
Target Rate                             &   $R_1 = R_2$ = {0.01} [BPCU]\\
Distance from $S$ to $R_C$             &   {25} [m]\\
Distance from $R_S$ to $U_\ui$ and $U_\uo$          &    {5} and {20} [m] \\
Distance from $R_C$ and $R_S$              &   {15} [m]\\
Distance from $R_C$ and $U_\uo$              &   {35} [m]\\
Distance from $S$ to $R_S$              &   {35} [m]\\
\hline
\multicolumn{2}{c}{ \bf Channel Parameters}\\
\hline
$S$-$R_C$ link: \textbf{f}                  & $m_f={8}$, $\Omega_f=1$ \\
$S$-$R_S$ link: \textbf{t}                  & $m_t={1.5}$, $\Omega_t=1$ \\
$R_C$-$R_S$ link: \textbf{D}                & $m_D={8}$, $\Omega_D=1$ \\
$R_C$-$U_\uo$ link: \textbf{g}                & $m_{g}={1.8}$, $\Omega_g=1$ \\
$R_S$-$U_\ui$ link: $\textbf{u}_I$          & $m^I_u={15}$, $\Omega^I_u=1$\\
$R_S$-$U_\uo$ link: $\textbf{u}_O$                         & $m^O_u={7.5}$, $\Omega^O_u=1$  \\
Path Loss Exponent                       & ${\alpha_f = \alpha_D = 2.2}$ 
\\ & {$\alpha_{U_\ui} = \alpha_{U_{\uo}} = 2$};\,\, 
 {$\alpha_g = {2.8}$} \\
& {$\alpha_t \in\{2.8,\, 3.1,\, 3.4\}$}\\
Reference Distance & {$d_0 = 1$ [m]}\\
\toprule
 \end{tabular}
 \end{center}
\end{table}

Fig. \ref{fig:OP_eta} depicts the OP for both users {\it vs}. the split factor $\eta = \frac{N_C}{N}$ when $N=N_C + N_S=200$. We can see that when the value $N_C$ is low, $U_\uo$ has a poor outage performance due to the marginal impact of the single reflection link $S$ $\rightarrow R_C\rightarrow U_\uo$ and double-reflection link $S$ $\rightarrow R_C\rightarrow R_S \rightarrow U_\uo$ once $U_\uo$ is far from $R_S$ and $R_C$. In contrast, since the $U_\ui$ is near to the $R_S$, its performance is better when $N_S$ assumes a high value.

\begin{figure}[h]
\centering
\includegraphics[width=.82\linewidth]{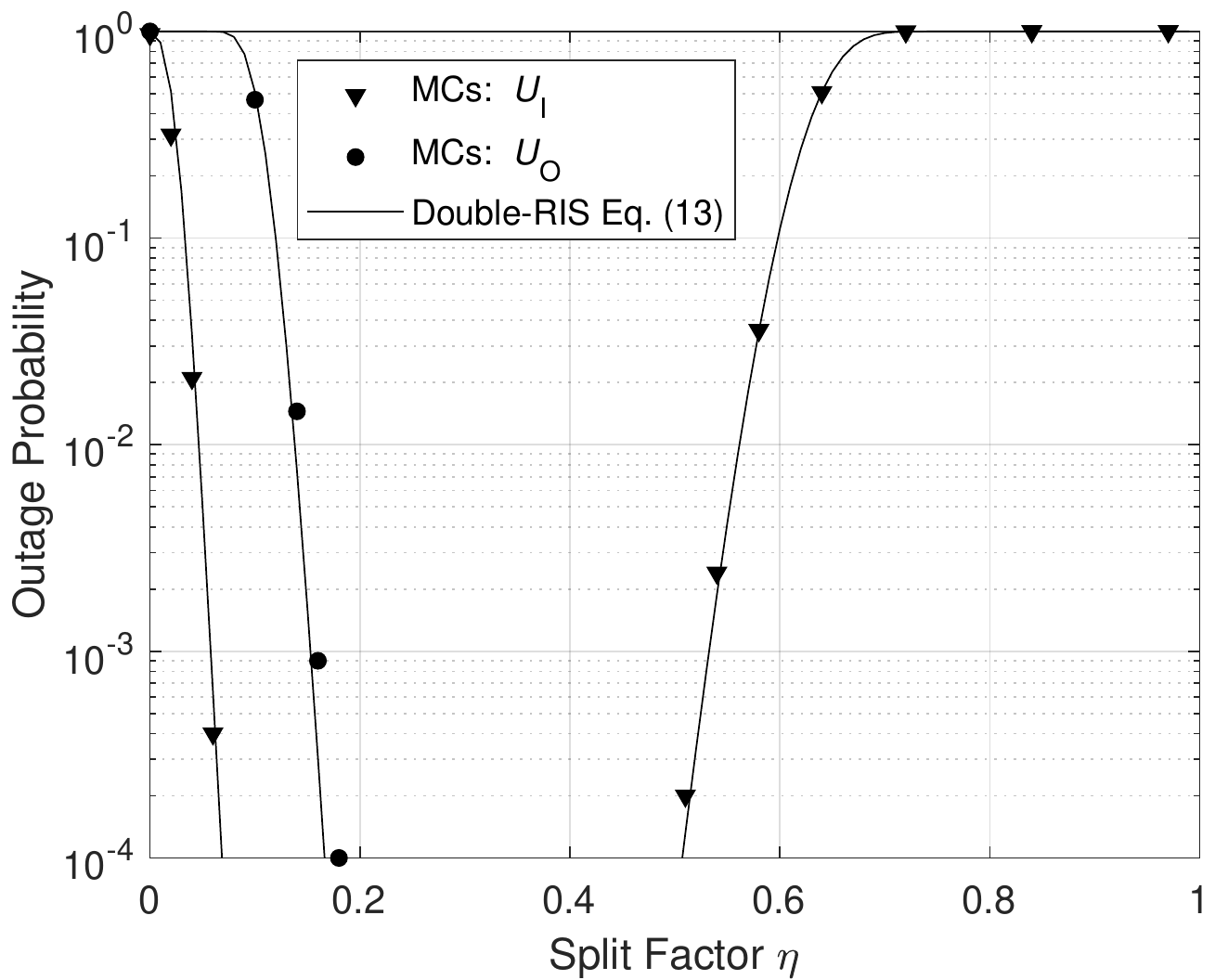}
\vspace{-4mm}
\caption{Outage performance {\it{vs}} split factor ($\eta$) for $N=200$ and 
 {$\alpha_t=3.4$} on scenario A.}
\label{fig:OP_eta}
\end{figure}

Fig. \ref{fig:EC_rho} shows the ergodic sum rate ${\text{\it{vs}}}$ the SNR ($\rho$) for three different scenarios, \textbf{A}, \textbf{B} and \textbf{C}. We can see that scenario \textbf{C} has the worse condition due to its low path-loss, besides, we can see the great potential of scenario \textbf{B} and how it can directly impact the double-RIS setup, outperforming even the single-RIS setup. Furthermore, we can see that the derived analytical equations \eqref{eq:ER} are very accurate when $\rho$ varies.

\begin{figure}[htbp!]
\centering
\includegraphics[width=.82\linewidth]{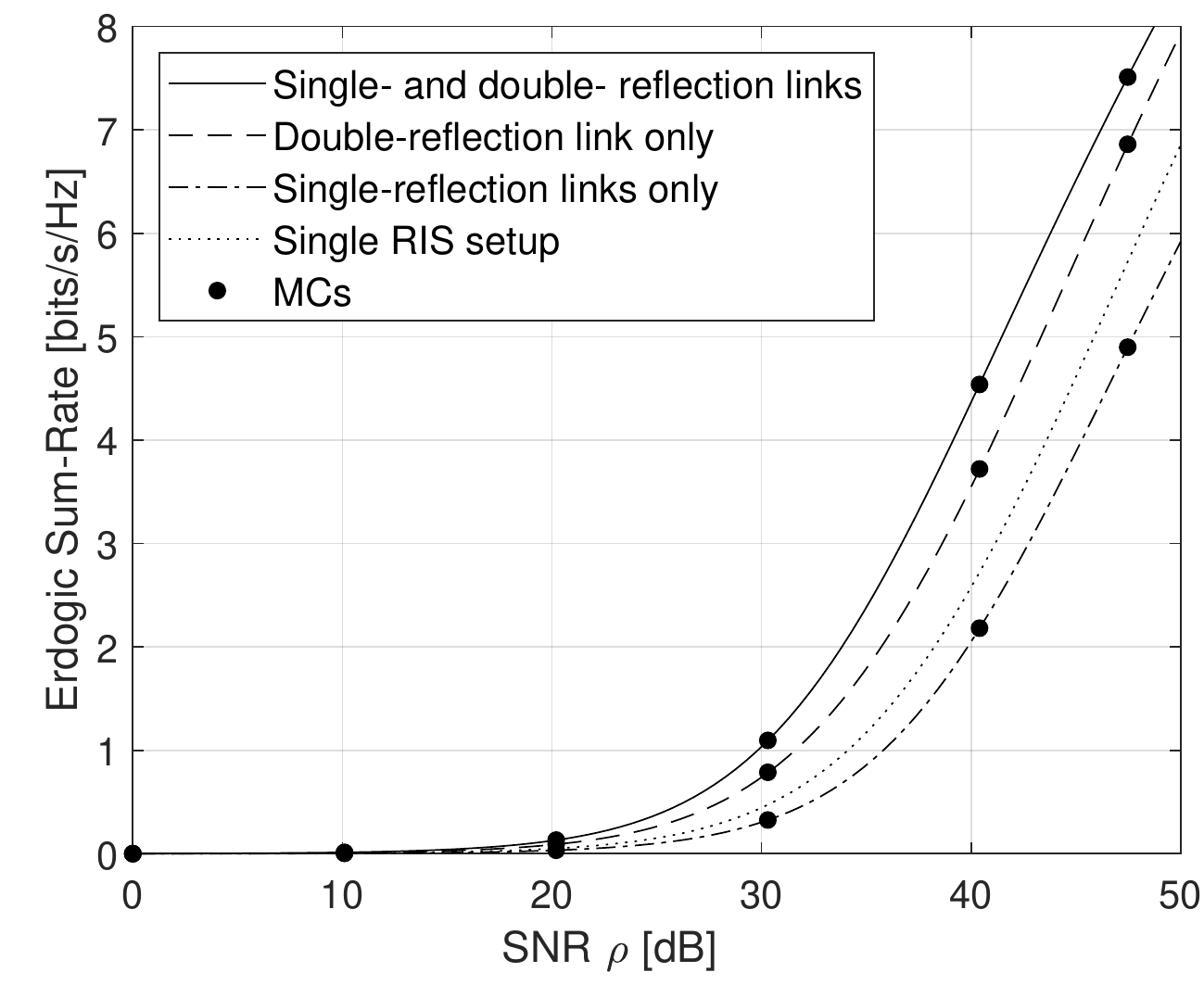}
\vspace{-5mm}
\caption{Ergodic sum-rate ($\Bar{R}_\ui + \Bar{R}_\uo$) {\it{vs}} SNR ($\rho$)  $N=200$, $\eta=0.35$ and $\alpha_t=3.4$.}
\label{fig:EC_rho}
\end{figure}

Fig. \ref{fig:EC_eta} depicts the ergodic sum rate for both users versus the split factor when $N=200$ for three different path-loss coefficients, $\alpha_t = \{2.8, 3.1, 3.4\}$, where the path-loss $\beta_t = \left(\frac{d_0}{d_t}\right)^{\alpha_t}$. We first notice that for all scenarios considered the analytical equations 
as summarized in Eq. \eqref{eq:ER} are very accurate for any value of the split factor $\eta$. Furthermore, we can see that, by improving the conditions of the direct link between the $S$ and the $R_S$, in terms of path-loss, the performance of the system can be improved in a general way, however, within these conditions, the double-RIS setup can be outperformed by the single-RIS setup. On the other hand, when $\alpha_t = \{3.1,3.4\}$, {the double-RIS setup} can obtain higher data rates than the single-RIS setup, revealing its great potential in scenarios where the direct link between the $S$ and $R_S$ is poor. Finally, we can see the scenarios where the double-RIS setup performs well, the number of meta-surface elements $N_S$ and $N_C$ can be strategically optimized aiming to obtain as better performance as possible.

\begin{figure}[htbp!]
\centering
\includegraphics[width=.82\linewidth]{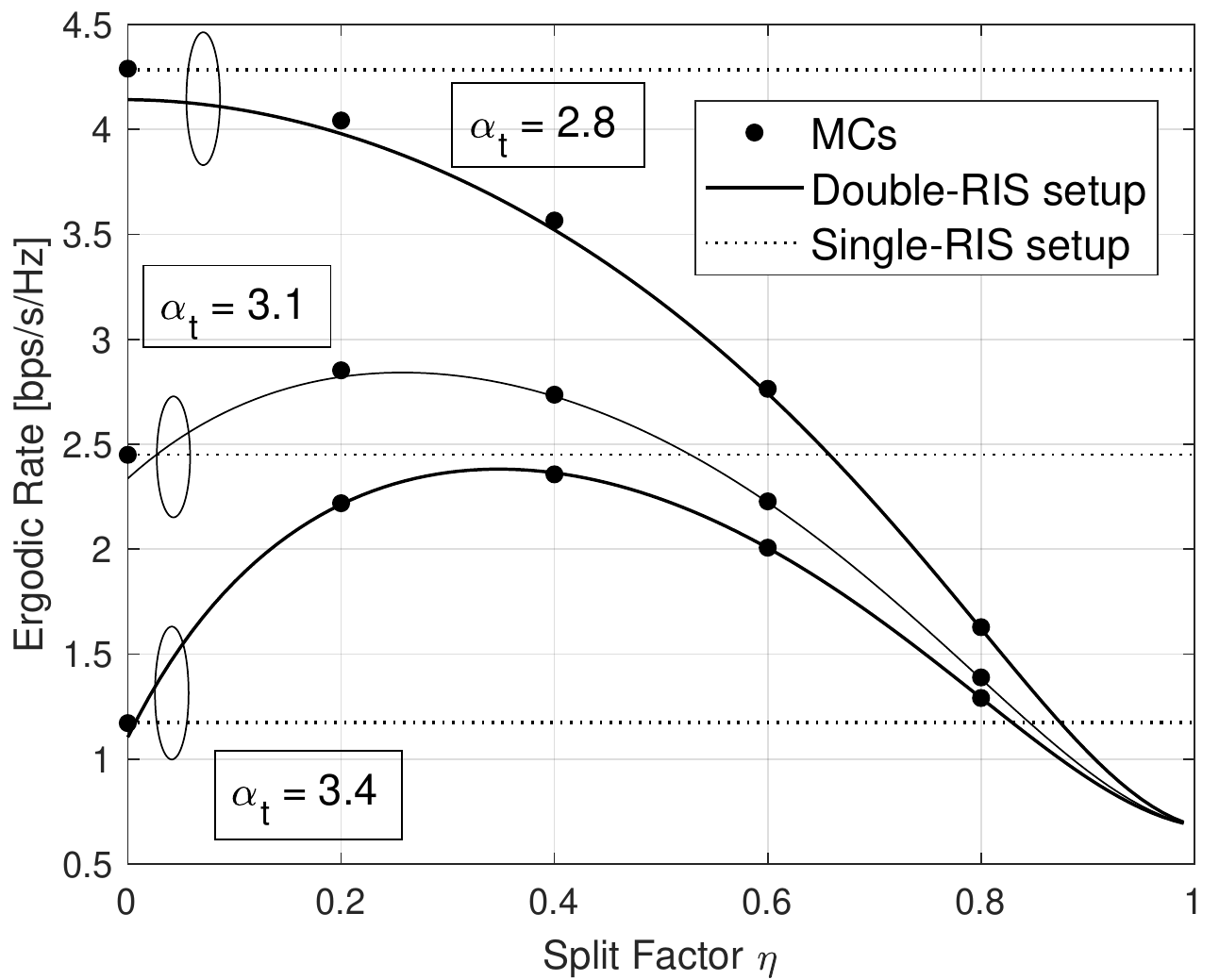}
\vspace{-6mm}
\caption{EC {\it vs} split factor ($\eta$) over an NOMA-aided conventional and STAR-RIS system {\it vs} single-RIS setup with $N=200$ and $\rho=35$ dB.}
\label{fig:EC_eta}
\end{figure}

Fig. \ref{fig:OP_rho} depicts the OP for both $U_\ui$ and $U_\uo$ also for the three different values of $N$. 
Firstly, one can 
notice that the derived analytical expressions \eqref{eq:OPI} and \eqref{eq:OPO} are very accurate for the considered parameters. Furthermore, we can notice that better performance for both $U_\ui$ and $U_\uo$ users occurs for high values of $N$. Additionally, we can observe how important is to know the CSI {aiming} to optimize the phase-shift matrix of $R_S$ and $R_C$ since it can provide very high-performance gains over the random phase-shift matrix strategy.

\begin{figure}[htbp!]
\centering
\includegraphics[width=.82\linewidth]{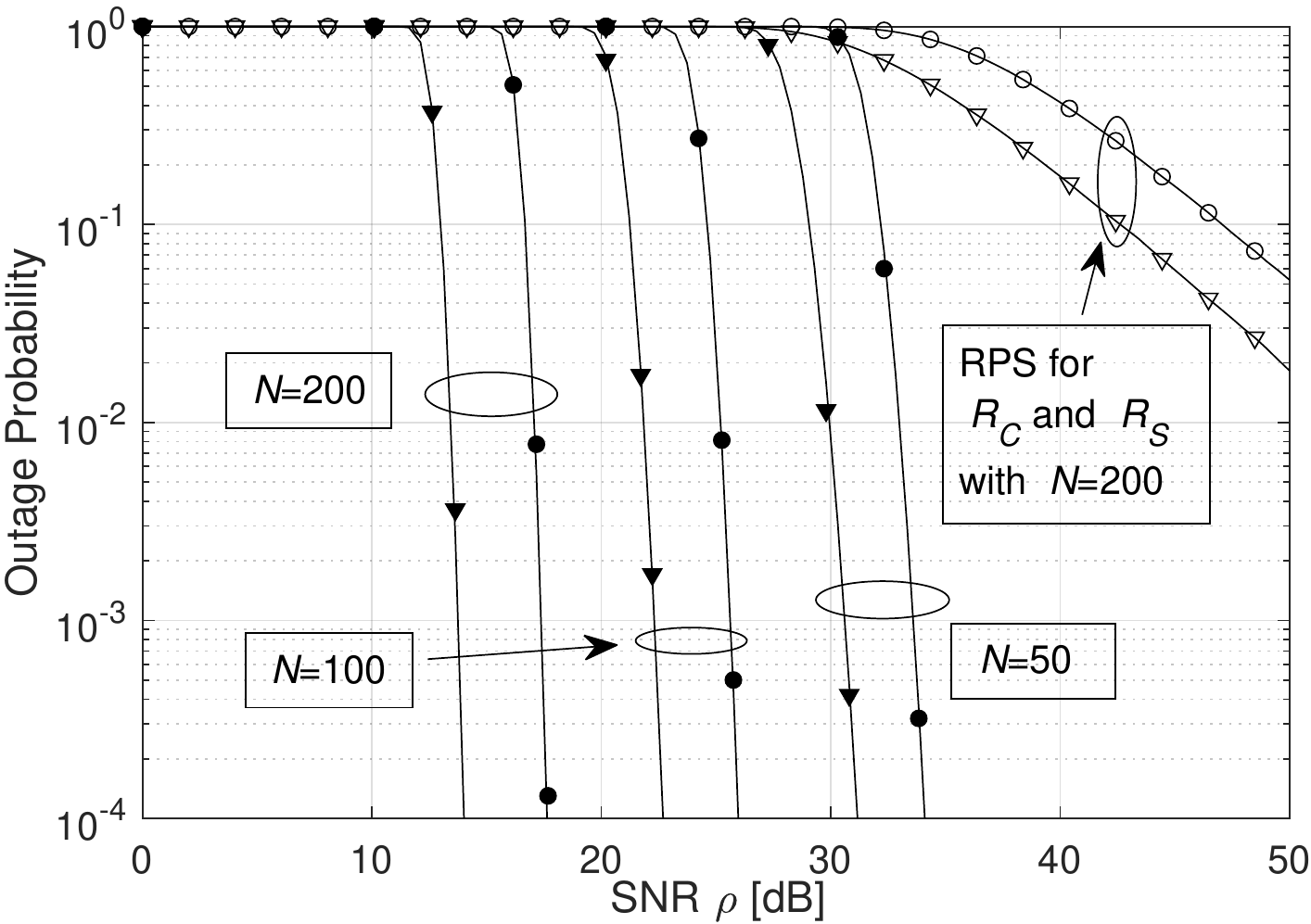}
\vspace{-6mm}
\caption{OP {\it vs} SNR ($\rho$) when $N=\{50,100,200\}$, \w{$\alpha_t=3.4$} and $\Bar{\gamma}_\ui = \Bar{\gamma}_\uo =0.5$.}
\label{fig:OP_rho}
\end{figure}

\section{Conclusions}\label{sec:concl}
In this paper, we have characterized the system performance of a NOMA-aided conventional and STAR-RIS under Nakagami-$m$ channels. We derived analytical equations for OP and EC using the Central Limit Theorem (CLT) and MM technique. The numerical (analytical and MC simulation) results reveal that the double-RIS setup can outperform the single-RIS setups when the values of $R_S$ and $R_C$ are suitably adjusted. Furthermore, in scenarios where the direct link between the $S \rightarrow R_S$ are in poor conditions w.r.t. the path-loss (large $\alpha_t$), the double-RIS can outperform the single-RIS setup. Finally, the simulation results demonstrated that the analytical proposed equations are very accurate under the parameters of the system.

\vspace{-1mm}
\section*{Acknowledgment}
This work was supported by the National Council for Scientific and Technological Development (CNPq) of Brazil under Grant 310681/2019-7, the CAPES (Financial Code 001).

\appendices

\section{Proof of Lemma \ref{lemma:Channel}}\label{app:a}
According to section \ref{ss:channelmodel} all links follow Nakagami-$m$ distribution whose mean and variance can be given as
\begin{IEEEeqnarray}{rCl}
    \mu_v &=& \dfrac{\Gamma(m_v+1/2)\sqrt{\Omega_v}}{\Gamma(m_v)\sqrt{m_v}},
    \\
    \sigma^2_v &=& \Omega_v - \mu_v^2, \qquad
    \forall v \in \{f,g,t,D,{u_\ui},{u_\uo}\},
\end{IEEEeqnarray}

The channel of $U_\uo$ in \eqref{eq:ho} can be written as 

\begin{IEEEeqnarray}{rCl}
   h_O \hspace{-0.1cm} &=& \hspace{-0.1cm}  \sqrt{\beta_{U_\uo}}\hspace{-0.1cm} \underbrace{\left(\hspace{-0.1cm}\underbrace{
    \sqrt{\beta_{t}}
    \mathbf{t} 
    +
    \sqrt{\beta_{f}\beta_{D}} \underbrace{\mathbf{f}
    \boldsymbol{\Phi}_C \mathbf{D}}_{\mathbf{w}}
     }_{\mathbf{x}}\hspace{-0.1cm}\right)\hspace{-0.1cm}
    \boldsymbol{\Phi}_S^{O} \mathbf{u}_{\uo}}_{Z_\uo} 
    + \sqrt{\beta_{f}\beta_{g}} \underbrace{\mathbf{f} \boldsymbol{\Phi}_C \mathbf{g}}_{Y},
    \nonumber
    \\
     &=& \sqrt{\beta_{U_\uo}} Z_\uo + \sqrt{\beta_f\beta_g} Y, \label{eq:appAaux1}
\end{IEEEeqnarray}
by setting $\theta_n=-\left(\angle f_n + \angle g_n \right)\; \forall n\in \mathcal{N}_C$, thus, $Y$ can be written as 
\begin{IEEEeqnarray}{rCl}
    && Y = \sum_{n=1}^{N_C} |f_n| |g_n|, \label{eq:appAaux2}
\end{IEEEeqnarray}
according to Lemma 2 of \cite{tahir2020analysis}, the distribution of $Y$ can be approximated as $Y  \overset{\rm{approx}}{\sim} \mathcal{G}\left(\dfrac{\mu^2_Y}{\sigma^2_Y},\dfrac{\sigma^2_{Y}}{\mu_{Y}}\right)$, where $\mu_Y = N_C \mu_f \mu_g$ and $\sigma^2_Y =  N_C(\Omega_f \Omega_g - \mu_f^2 \mu_g^2)$. Since $\boldsymbol{\Phi}_C$ is set to maximize the power of the single reflection link $S \rightarrow$ $R_C$ $\rightarrow$ $U_\uo$, its combination in $\mathbf{w}$ will appear random, hence, by applying Lemma 4 of \cite{tahir2020analysis} we can approximate $\mathbf{w}$ as 
\begin{equation}
    \mathbf{w} \overset{\rm{approx}}{\sim} \mathcal{CN}\left(\mathbf{0}, N_C \mathbf{I}_{N_S}\right),
\end{equation}
one can notice that \cite{ding2020impact} shows that the Central Limit Theorem (CLT) provides a good approximation for $\mathbf{w}$, however, the correlation clearly impacts the approximation of $|h_\uo|$ and it will be characterized in the numerical results in sub-section.
Besides, the distribution of $\mathbf{x}$ can be given as follows
\begin{IEEEeqnarray}{rCl}
    \mathbf{x} \overset{\rm{approx}}{\sim} \mathcal{CN}\left(\mathbf{0},(\beta_t +\beta_f\beta_D N_C) \mathbf{I}_{N_S}\right),
\end{IEEEeqnarray}
since the distribution of $x_n$ follows complex Gaussian, the distribution of $|x_n|$, follows Rayleigh 
 $\forall n \in \mathcal{N}_S$ whose mean is given as $\mu_{x} \triangleq \mathbb{E}[|x_n|] = \sqrt{\frac{\pi(\beta_t+\beta_f\beta_D N_C)}{4}}$ and variance can be given as $ \sigma^2_{x} \triangleq \mathbb{V}[|x_n|] = \frac{(4-\pi)(\beta_t + \beta_f\beta_DN_C)}{4}$.

Furthermore we set $\varphi_n^\uo = -\left(\angle x_n + \angle u_{\uo,n}\right)\; \forall n\in \mathcal{N}_S$, in order to maximize the received power of $U_\uo$ thus $Z_\uo$ can be written as follows
\begin{IEEEeqnarray}{rCl}
&& Z_\uo = \xi \sum_{n=1}^{N_S} |x_n| |u_{\uo,n}|, \label{eq:appAaux3}
\end{IEEEeqnarray}
we should notice that differently of \eqref{eq:appAaux2}, eq. \eqref{eq:appAaux3} represents the sum of the product of a Rayleigh and Nakagami-$m$ r.v.; however, since Rayleigh can be viewed as a particular case of a Nakagami-$m$ r.v., we also applied Lemma 2 of \cite{tahir2020analysis} to approximate the PDF of $Z_{\uo}$ as Gamma r.v., hence, $Z_\uo \overset{\rm{approx}}{\sim} \mathcal{G}\left(\dfrac{\mu^2_{Z_\uo}}{\sigma^2_{Z_\uo}},\xi \dfrac{\sigma^2_{{Z_\uo}}}{\mu_{{Z_\uo}}}\right)$, where $\mu_{Z_\uo} = \xi N_S \mu_x \mu_{U_\uo}$ and $\sigma^2_{Z_\uo} \approx \xi^2 N_S(\Omega_{U_\uo}(\beta_t +\beta_f\beta_D N_C) -  \mu^2_{U_\uo}\mu_{x}^2)$, where the {\it approx.} is due to the correlation in $\mathbf{x}$.

Hence, the distribution of $|h_O|$ in \eqref{eq:appAaux1}, is the sum of two scaled gamma random variables with different shapes and scales parameters, since the exact closed-form expression for its PDF can be toilsome to obtain, we approximate the distribution of $|h_O|$ according to the well-known \textit{Welch-Satterthwaite} \cite{abusabah2020approximate} as follows 
\vspace{-0.2cm}
\begin{IEEEeqnarray}{rCl}
\hspace{-4mm} |h_O| \sim \mathcal{G}\left(\frac{\left( \sqrt{ \beta_{U_\uo}}  \xi N_S \mu_x \mu_{U_\uo} + \sqrt{\beta_f\beta_g} N_C \mu_f \mu_g \right)^2}{\xi^2 \beta_{U_\uo} \sigma^2_{Z_O} + \beta_f\beta_g \sigma^2_Y }, 
    \right. \nonumber
    \\  
  \hspace{-4mm}  \left.
   \dfrac{\xi^2 \beta_{U_\uo} \sigma^2_{Z_O} + \beta_f\beta_g \sigma^2_Y }{ \xi \sqrt{ \beta_{U_\uo}} N_S \mu_x \mu_{U_\uo} + \sqrt{\beta_f\beta_g} N_C \mu_f \mu_g } \right).
\end{IEEEeqnarray}
By following the same step in \eqref{eq:appAaux1} and applying optimal phase-shift in $\varphi^\ui_n$, the channel of $U_\ui$ in \eqref{eq:hi} can be written as
\begin{IEEEeqnarray}{rCl}
    |h_I| &=& |\sqrt{\beta_{U_\ui}} \underbrace{\mathbf{x}
    \boldsymbol{\Phi}_S^{I} \mathbf{u}_{\ui}}_{Z_\ui}|  =  \sqrt{\beta_{U_\ui}} Z_I,
\end{IEEEeqnarray}
following the same steps done above, we obtain that $Z_\ui \overset{\rm{approx}}{\sim} \mathcal{G}\left(\dfrac{\mu^2_{Z_\ui}}{\sigma^2_{Z_\ui}}, \dfrac{\sigma^2_{{Z_\ui}}}{\mu_{{Z_\ui}}}\right)$, where the mean and variance are given as $\mu_{Z_\ui} = \xi N_S \mu_x \mu_{U_\ui}$ and $\sigma^2_{Z_\ui} \approx \xi^2  N_S( \Omega_{U_\ui}(\beta_t + \beta_f \beta_D N_C) - \mu^2_{U_\ui}\mu_{x}^2)$ respectively, therefore 
\begin{equation}
|h_\ui|\sim\mathcal{G}\left(\dfrac{\mu^2_{Z_\ui}}{\sigma^2_{Z_\ui}},\sqrt{\beta_{U_\ui}} \dfrac{\sigma^2_{{Z_\ui}}}{\mu_{{Z_\ui}}}\right).
\end{equation}

\vspace{-2mm}
\section{Proof of Theorem \ref{theo:EC}}\label{app:b}
The EC of $U_\ui$ and $U_\uo$ can be written as
\begin{IEEEeqnarray}{rCl}
    \Bar{R}_i &=& \mathbb{E}\left[\log_2\left(1 +\gamma_i\right)\right], \qquad \forall i\in\{\ui,\uo\},
\end{IEEEeqnarray}

\noindent where $\gamma_\ui$ and $\gamma_\uo$ are given respectively by \eqref{eq:gammai} and \eqref{eq:gammao}. By utilizing Jensen's inequality we can obtain an upper bound based on 
$\log_2(1+x)$, which is concave w.r.t. $x$, therefore
\begin{IEEEeqnarray}{rCl}
    \Bar{R}_\ui \leq \log_2\left(1+\rho {\lambda_I} \mathbb{E}[ |h_\ui|^2]\right), \label{eq:rateupi}\\
    \Bar{R}_\uo \leq \log_2\left(1 + \frac{\rho{\lambda_O}\mathbb{E}[ |h_\uo|^2]}{\rho{\lambda_I}\mathbb{E}[ |h_\uo|^2]+1}\right). \label{eq:rateupo}
\end{IEEEeqnarray}

Since $|h_i|\sim\mathcal{G}(k_{h_i},\theta_{h_i})$, we have that
\begin{IEEEeqnarray}{rCl}
    f_{|h_i|^2}(x) &=& f_{|h_i|}(x)\frac{1}{2\sqrt{x}} 
    = \frac{1}{2 \theta_i^{k_i} \Gamma(k_i)} x^{\frac{k_i}{2}-1} e^{-\frac{x^{1/2}}{\theta_i}},
\end{IEEEeqnarray}
therefore $|h_i|^2\sim\mathcal{G}\mathcal{G}(k_i,\theta_i^2,0,1/2)$ \cite[Eq. 17.116]{johnson1994continuous}, thus, the mean of generalized gamma r.v. is given by \cite[Eq. 17.118]{johnson1994continuous}
\begin{equation} \label{eq:meangg}
    \mathbb{E}[|h_i|^2] = \theta_i^2 \frac{\Gamma(k_i+2)}{\Gamma(k_i)},
\end{equation}
One can notice that if $k_i\rightarrow +\infty$, $\Gamma\left(k_i+2\right) \approx \Gamma\left(k_i\right) k_i^{2}$, thus \eqref{eq:meangg} can be approximate as
\vspace{-0.1cm}
\begin{equation} \label{eq:meanapprox}
    \mathbb{E}[|h_i|^2] \approx \theta_i^2 k_i^2,
\end{equation}
thus substituting \eqref{eq:meanapprox} into \eqref{eq:rateupi} and \eqref{eq:rateupo} respectively, \eqref{eq:ER} are obtained. This completes the proof.

\vspace{-0.1cm}
\bibliographystyle{IEEEtran}
\bibliography{Ref_file.bib}

\end{document}